\newtheorem{theorem}{Theorem}[section]
\title{Convex Cover and Hidden Set in Funnel Polygons}
\author{Reilly Browne\thanks{Department of Computer Science,
        Stony Brook University, {\tt rjbrowne@cs.stonybrook.edu}}}
\begin{document}
\thispagestyle{empty}
\maketitle

\begin{abstract}
We present linear-time algorithms for both maximum hidden set and minimum convex cover in funnel polygons. These algorithms show that funnel polygons are ``homestead'' polygons, i.e. polygons for which the hidden set number and the convex cover number coincide. We extend the algorithm to apply to maximum hidden vertex set and use the result to give a 2-approximation for all three problems in pseudotriangles.
\end{abstract}

\section{Introduction}
We study two classic visibility problems,\textit{ maximum hidden set} and \textit{minimum convex cover}. A hidden set is a set $H$ of points in a polygon $P$ of such that no two points $p,q\in H$ ``see'' each other, where $p$ sees $q$ if the line segment is in P, $pq \subseteq P$. A maximum hidden set is such a set of maximum cardinality, i.e. the problem is finding such a maximum set for a given polygon instance. A convex cover of a polygon $P$ is a set of convex polygons $C$ such that the union of all the polygons is equal to $P$, $\bigcup _{C_i \in C} = P$. Note that these polygons must lie within $P$. A minimum convex cover is such a set of minimum cardinality, i.e. the problem is finding such a minimum set for a given polygon. A relationship between the size of a maximum hidden set in a polygon $P$, $hs(P)$, and the size of a minimum convex cover of $P$, $cc(P)$, comes from Shermer \cite{Shermer}.

$$hs(P) \leq cc(P) $$

This means that for any polygon, it suffices to provide a hidden set of size $k$ and a convex cover of size $k$ to show that $k = hs(P) = cc(P) = k$. This is not always possible for all polygons, but for those which this is the case Browne and Chiu \cite{browne2022collapsing} classify these as ``homestead'' polygons.

Both maximum hidden set and minimum convex cover are APX-hard \cite{EIDENBENZHidden, EidenbenzConvexCover} when considering the case of general simple polygons, precluding a PTAS unless P=NP. However, this hardness result does not extend to many common subclasses of polygons. Browne and Chiu \cite{browne2022collapsing} gave linear-time algorithms for simultaneously calculating maximum hidden set and minimum convex cover in spiral polygons and histogram polygons. There also exists polynomial-time for calculating more restricted versions of these problems in polygon subclasses. Notably, Ghosh, Maheshwari, Pal, Saluja and Madhavan's \cite{GhoshWeakVisible} algorithm  finds the hidden set restricted to vertices in polygons that are weakly visible from a convex edge. This algorithm runs in $O(n^2)$ time where $n$ is the number of vertices. While no explicit algorithm is given for funnel polygons, Choi, Shin, and Chwa \cite{choi1995characterizing} characterize the visibility graph (of the vertices) of funnel polygons as weakly triangulated graphs. Hidden vertex set is equivalent to independent set on the visibility graph of a polygon. Since weakly triangulated graphs are perfect graphs, independent set can be found in polynomial-time \cite{GROTSCHEL1984325}. Using the specific properties of weakly triangulated graphs, Spinrad and Sritharan's \cite{spinrad1995algorithms} algorithm for independent set implies a faster algorithm for hidden vertex set in funnel polygons, which takes $O(n^4)$ time. As funnel polygons are a subclass of the more general class of polygons weakly visible from a convex edge, Ghosh, Maheshwari, Pal, Saluja and Madhavan's \cite{GhoshWeakVisible} algorithm can also be used, achieving an even faster time of $O(n^2)$.

In Section \ref{sec:funnel}, we describe an algorithm for solving both maximum hidden set and minimum convex cover in funnel polygons, and in Section \ref{sec:variants} we describe how the algorithm can be adapted to the problem of finding the maximum hidden vertex set. We extend these techniques in Section \ref{sec:pseudos} to the more general case of pseudotriangles, achieving a simple 2-approximation. It is currently not known if either problem is NP-hard for the case of pseudotriangles, but both algorithms run in $O(n)$ time, which is optimal for both problems as the complexity of the output alone can be $\Omega(n)$ \cite{Shermer}.

\section{Funnel polygons} \label{sec:funnel}

We describe polygons as a list of vertices $v_1, v_2 , . . . v_n$ where the index reflects the clockwise order of the vertices. We will also refer to the edges with the same indices, i.e. $e_i = v_{i}v_{i+1}$. We use the notion of ``left'' tests or orientation tests defined on three points $p_1, p_2, p_3$. $p_3$ is said to be to the ``left'' of $p_1,p_2$ if a left/counter-clockwise turn is made at $p_2$ in the polygonal chain $p_1,p_2,p_3$. Similarly, $p_3$ is said to be to the ``right'' of $p_1,p_2$ if a right/clockwise turn is made at $p_2$ in the polygonal chain $p_1,p_2,p_3$.

\textit{Funnel polygons} are defined as simple polygons composed of a convex edge $e_n = v_nv_1$ with two reflex chains $R_1 = v_1,v_2,...,v_t$ and $R_2 = v_t,...,v_{n-1},v_n$. Assume without loss of generality that $t \geq n-t$, i.e. $|R_1| \geq |R_2|$. We will also assume that each reflex vertex is strictly reflex, i.e. no three vertices within a chain are collinear and there is a turn at every vertex. If this is not the case, we can simply remove the vertices that are not strictly reflex. Funnel polygons are also sometimes referred to as \textit{tower polygons}.

We will use a recursive procedure to compute the hidden set and convex cover, and thus will be able to prove the correctness of the algorithm using induction. Our algorithm gives both a hidden set $H$ and a convex cover $C$, but can be implemented to only find one. The algorithm finds both of these in $O(n)$ time.

\begin{theorem}\label{thm:funnel}
For any funnel polygon $P$, a hidden set $H$ in $P$ and a convex cover $C$ of $P$ such that $|H| = |C|$ can be found in linear-time. 
\end{theorem}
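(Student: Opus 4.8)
The plan is to reduce the theorem to a single constructive task via Shermer's bound $hs(P)\le cc(P)$. For any hidden set $H$ we have $|H|\le hs(P)$ and for any convex cover $C$ we have $cc(P)\le |C|$, so exhibiting a matched pair with $|H|=|C|$ forces $|H|=hs(P)=cc(P)=|C|$, simultaneously proving optimality of both outputs and that $P$ is homestead. The target $|H|=|C|$ is the natural one because a convex piece can contain at most one point of a hidden set (two points of a convex set see each other), so $|H|\le|C|$ holds for \emph{any} cover and hidden set; I would therefore aim to produce a cover whose pieces biject with the hidden points, each piece charged to the unique hidden point it contains, and emit the pieces and points together so this charging is immediate.

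The construction I would carry out by induction on the number of reflex vertices, peeling $P$ from the convex bottom edge $e_n$ toward the apex $v_t$. The base case is a convex funnel, where one piece together with any interior point gives $|H|=|C|=1$. For the inductive step I would work with the lowest reflex vertices of the two chains, $v_2$ on $R_1$ and $v_{n-1}$ on $R_2$, and shoot the extensions of their incident edges into the interior until each first meets the boundary; these \emph{windows} are exactly the lines past which the corresponding reflex vertex ceases to block visibility. The portion of $P$ between $e_n$ and the windows is taken as a convex piece $C_i$ (the former reflex vertices now lie flat on the straight extensions, so no reflexity survives inside $C_i$), and the hidden point $h_i$ is placed at the deepest point of $C_i$, in the shadow of a blocking reflex vertex. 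Deleting the processed region leaves a smaller funnel, on which I recurse, adding exactly one piece and one point per level so that $|H|=|C|$ is preserved throughout.

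Two facts then have to be discharged. The first, that $C$ is a cover by convex subsets of $P$, I expect to be routine: each peeled piece is cut out by straight edge-extensions and so is convex by construction, and the windows tile $P$ from $e_n$ up to the apex. The second is where I expect the real difficulty, namely that $H$ is \emph{globally} hidden. It is not enough that $h_i$ fails to see the points of the immediately adjacent sub-funnel; I must show that $h_i$ is invisible from every point peeled at any later level, i.e. that the blocking introduced at one window persists all the way up the funnel. I would establish this with a monotonicity/propagation lemma: once the segment from $h_i$ is stopped by its reflex vertex, the segment from $h_i$ to any point lying beyond that window is also stopped, so mutual invisibility propagates transitively through the recursion and the accumulated $H$ is pairwise hidden.

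It then remains to handle the configuration analysis that supports the induction: that every window is well defined (each extension meets a chain rather than degenerating along $e_n$), that the residual region is always a genuine funnel (its new bottom chord convex and its two residual chains still reflex) so the inductive hypothesis applies unchanged, and the special treatment when one chain is exhausted before the other — precisely the point at which the hypothesis $|R_1|\ge|R_2|$ is used to fix which side drives the sweep. Finally, because the windows advance monotonically along the two chains, each reflex vertex is visited a constant number of times and each ray-shot query can be charged to an edge it clears, which yields the claimed $O(n)$ running time. The crux of the whole argument is thus the propagation lemma together with the convexity-and-funnel invariants; once those are in place the induction closes and both bounds meet.
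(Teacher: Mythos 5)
Your framing is the same as the paper's: by Shermer's inequality $hs(P)\le cc(P)$, any hidden set and convex cover of equal cardinality are both optimal, so everything reduces to the matched construction. The genuine gap is in the hiddenness half of that construction. The ``propagation lemma'' you rely on --- ``once the segment from $h_i$ is stopped by its reflex vertex, the segment from $h_i$ to any point lying beyond that window is also stopped'' --- is false as stated. A window is a chord of $P$, and two points of a simple polygon lying on opposite sides of a chord can perfectly well see each other; what the window of a reflex vertex $v$ actually certifies is only that a point on the far side of the supporting line of the edge just beyond $v$ sees nothing of the \emph{reflex chain} from that edge onward (visibility to a reflex chain is a contiguous arc, so missing one edge means missing everything past it). It says nothing about interior points, or points of the opposite chain, lying above the window. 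Consequently, placing $h_i$ at ``the deepest point of $C_i$'' --- which in your slab decomposition is generically a Steiner point in the interior or on the opposite chain --- gives no control over visibility to hidden points two or more levels up, and such points can see each other across intermediate slabs. The argument only closes if every hidden point is placed in the relative interior of an edge of a \emph{single} reflex chain; then pairwise invisibility is immediate because the chord between two such points leaves $P$ on the exterior side of the chain, and no propagation lemma is needed. This is exactly what the paper does (midpoints of the edges of the longer chain $R_1$), and it is also what resolves your unanswered question of which side ``drives the sweep'' and why the count comes out to one point per edge of $R_1$ rather than one per level of a two-sided peel.

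A second, related problem is your induction invariant. Cutting with windows from the lowest reflex vertex of \emph{both} chains simultaneously generically leaves a residual region that is not a funnel: the two windows either cross in the interior, so the new bottom consists of two chords meeting at a reflex Steiner vertex, or they land on $e_n$, leaving a bottom chain of several edges and hence too many convex vertices. The paper sidesteps this by taking as its default piece the quadrilateral $v_jv_{j+1}v_{n-j}v_{n-j+1}$ on polygon vertices, and introducing a single Steiner point via one edge extension only when that quadrilateral fails to be convex, truncating on one side only so that the residual polygon is again a funnel. You would need to commit to an analogous one-sided cut for the recursion to be well-founded. The covering, convexity, and $O(n)$ claims are unobjectionable once these two points are repaired.
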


\begin{proof}
Algorithm \ref{alg:funnel} essentially determines a convex decomposition (with Steiner points) such that for each piece in the decomposition, a hidden point can be placed along either $R_1$ or $R_2$ corresponding to it. 

The algorithm considers two cases. The first, easier case, is where all of the edges $e_i, e_{n-i}$ for $i \in [0,n-t]$ strongly see each other, i.e. they form a convex quadrilateral $v_i,v_{i+1},v_{n-i},v_{n-i+1}$ that is within the polygon $P$. We will refer to this as the Case 1 and the case in which this is not true as Case 2. We give two examples of Case 1 in Figure \ref{fig:funnel_easy}. If this is the case, then we can simply place a hidden point on each midpoint of $R_1$ (the longer chain) and connect the corresponding edges $e_i, e_{n-i}$ until there are no more edges to connect across with from $R_2$. At this point, we can simply partition the remaining region into triangles between the remaining edges of $R_1$ and the vertex $v_{t+1}$.

\begin{figure}[ht]
    \centering
    \includegraphics[width=.5\textwidth]{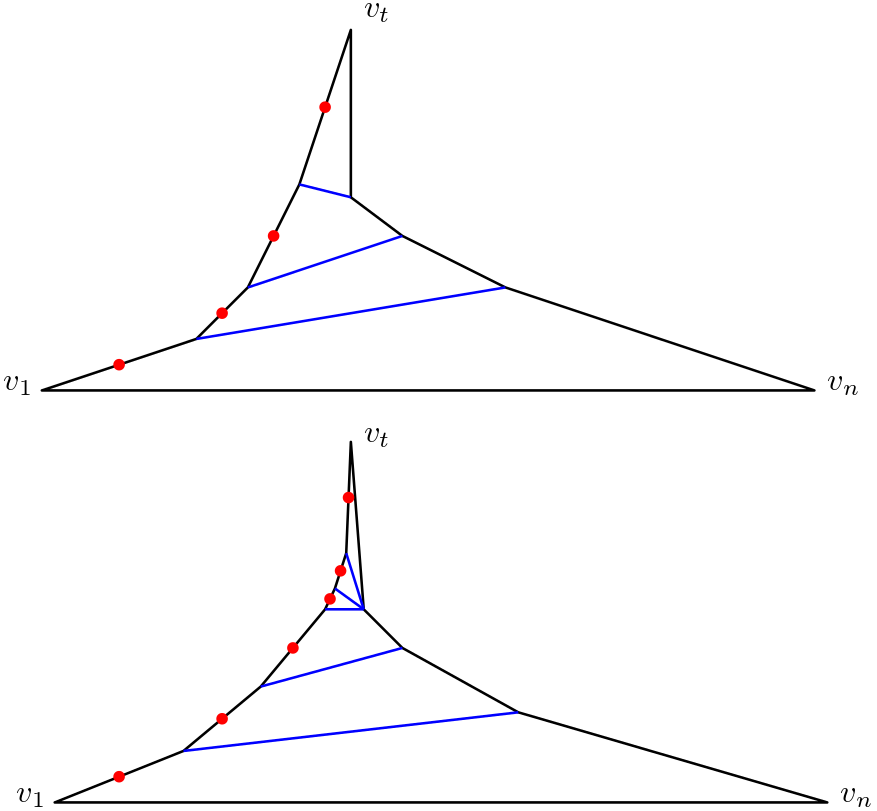}
    \caption{Case 1 for the funnel polygons.}
    \label{fig:funnel_easy}
\end{figure}

From here, we can use an induction on $n$, the number of vertices, to cover Case 2 where not all edge pairs satisfy the strong visibility requirement of Case 1. Clearly, when $n=3$, we have an instance of Case 1 and thus already have a solution. From here we just need to show that if it holds for $n \in [3,k]$, then it also holds for $n = k+1$. We can assume that for when $n=k+1$, we do not have an instance of Case 1 since we have already shown that. Let $e_i,e_{n-i}$ be the lowest $i$ pair for which $v_i,v_{i+1},v_{n-i},v_{n-i+1}$ is not within $P$. 

If $v_i$ does not see $v_{n-i+1}$, then the pair $e_{i-1} e_{n-i+1}$ would also not have their quadrilateral within $P$ and $i-1 < i$, so $v_i$ sees $v_{n-i+1}$. If $v_{i+1}$ does not see $v_{n-i}$, then either $v_i$ does not see $v_{n-i}$ or $v_{n-i+1}$ does not see $v_{i+1}$ because if there is an internal vertex $u$ in the shortest path from $v_{i+1}$ to $v_{n-i}$, it must be from either the chain of $R_1$ after $v_{i+1}$ or the chain of $R_2$ before $v_{n-i}$. This follows from \cite[Lemma 1]{GhoshWeakVisible}, since funnel polygons are a subclass of polygons weakly visible from a convex edge. Because $R_1$ is a reflex chain, if $u \in R_1$ then $u$ must also be in the shortest path from $v_i$ to $v_{n-i}$. If $u \in R_2$, then $u$ must be in the shortest path from $v_{n-i+1}$ to $v_{i+1}$. See Figure \ref{fig:funnel_block} for an example. Since these shortest paths have internal vertices, the endpoints cannot see each other.

\begin{figure}[ht]
    \centering
    \includegraphics[width = .5\textwidth]{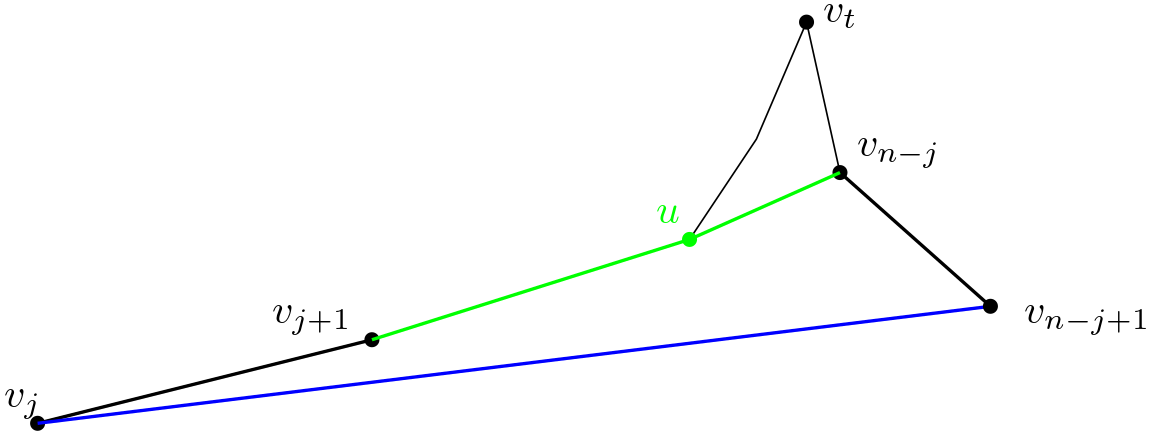}
    \caption{Lemma 1 of \cite{GhoshWeakVisible} implies that if $v_{i+1}$ does not see $v_{n-i}$ then either $v_i$ does not see $v_{n-i}$ or $v_{n-i+1}$ does not see $v_{i+1}$.}
    \label{fig:funnel_block}
\end{figure}

\begin{figure}[ht]
    \centering
    \includegraphics[width = .5\textwidth]{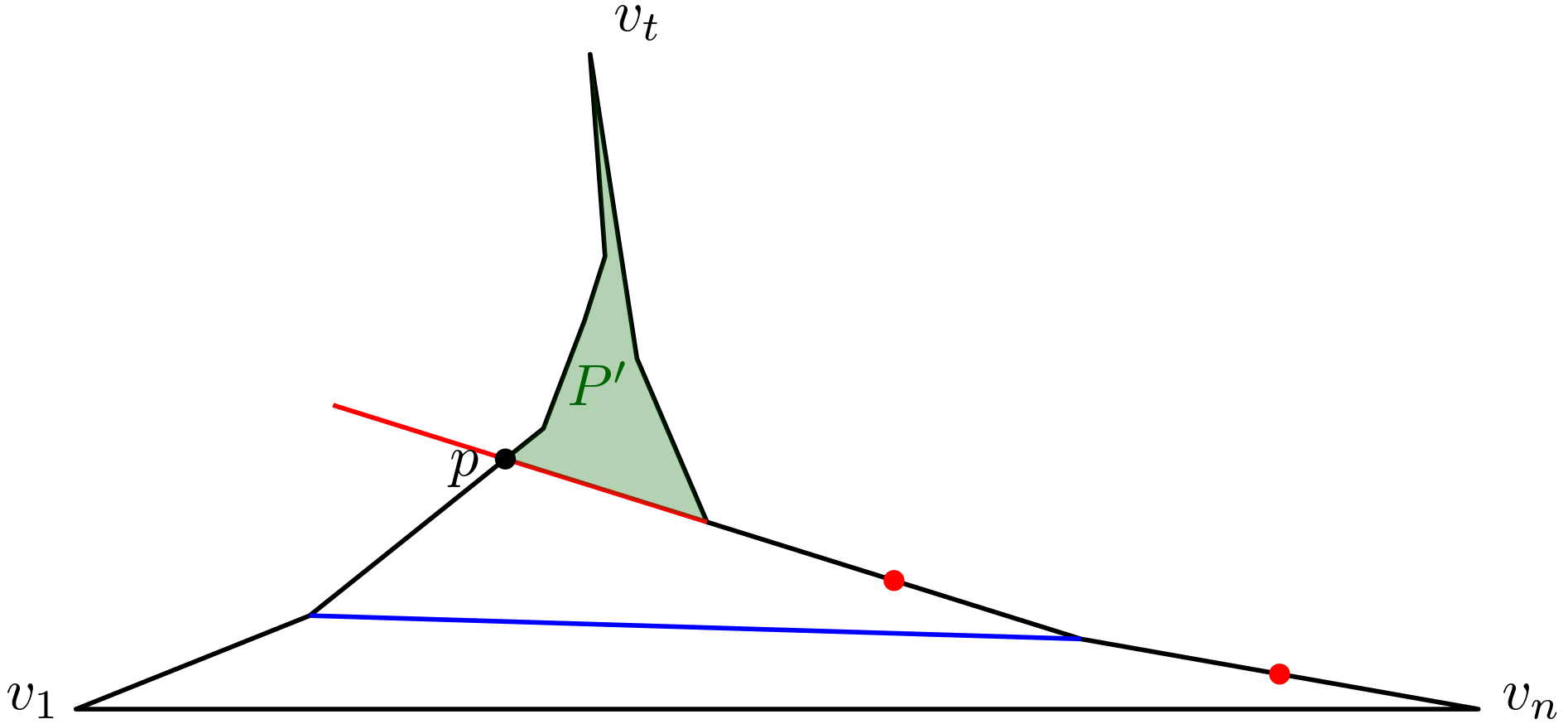}
    \caption{Case 2 for funnel polygons.}
    \label{fig:funnel_recurse}
\end{figure}

Therefore, since in all cases we find a convex cover and hidden set of the same size, Theorem \ref{thm:funnel} holds. Algorithm \ref{alg:funnel} runs in $O(n)$ time. First, note that determining whether we are in Case 1 or Case 2 only takes $O(1)$ time per edge in $P$. This is because the only way to be in Case 2 is if the edge extension of one edge intersects with another edge, which takes $O(1)$ to determine. We do not need to compare with any other edges of the polygon. $P'$ has complexity at most 1 less than $P$, which is only the case if the index $i$ at which we form $P'$ is 1, giving us the following recurrence:
$$T(n) = T(n-1) + O(1)$$

The base case is $T(1) = O(1)$, which solves to a total runtime of $O(n)$.

\end{proof}

\begin{algorithm}[H]
\caption{Algorithm for hidden set and convex cover in funnel polygons}\label{alg:funnel}
\KwData{$P$, a funnel polygon with convex edge $v_nv_1$}
\KwResult{$H$, a maximum hidden set in $P$ and $C$, a minimum convex cover of $P$}

$H \leftarrow \{\}$\;
$C \leftarrow \{\}$\;
$i \leftarrow 1$\;

\While{$i \leq t$} {
\If{$v_{n-i}$ is left of $v_iv_{i+1}$}
{
    \ForEach{$j \in [1,i]$}
    {
        Add the midpoint of $e_j$ to $H$\;
        Add quadrilateral $v_j,v_{j+1},v_{n-j},v_{n-j+1}$ to $C$\;
    }
    Determine the point $p$ of intersection between   $e_{n-i-2}$ and the extension of $e_i$\;
    Add quadrilateral $v_i,v_{i+1},p,v_{n-i+1}$ to $C$\;
    Recurse on the subpolygon $P' = v_{i+1}, ...v_{n-i},p$. This yields $(H',C')$ \; 
    Return $(H \cup H', C \cup C')$\; 
}
\If{$v_{i+1}$ is right of $v_{n-i+1}v_{n-i}$}
{
    \ForEach{$j \in [1,i]$}
    {
        Add the midpoint of $e_{n-j}$ to $H$\;
        Add quadrilateral $v_j,v_{j+1},v_{n-j},v_{n-j+1}$ to $C$\;
    }
    Determine the point $p$ of intersection between    $e_i$ and the extension of $e_{n-i-2}$\;
    Add quadrilateral $v_i,p,v_{n-i},v_{n-i+1}$ to $C$\;
    Recurse on the subpolygon $P' = p, v_{i+1} ...v_{n-i}$. This yields $(H',C')$ \; 
    Return $(H \cup H', C \cup C')$\; 
}
$i \leftarrow i+1$\;
}
\ForEach{$j \in [1,t]$}
    {
        Add the midpoint of $e_{j}$ to $H$\;
        \If {$j<n-t$} {
        Add quadrilateral $v_j,v_{j+1},v_{n-j},v_{n-j+1}$ to $C$\; }
        \Else {
        Add triangle $v_j,v_{j+1}, v_{t+1}$\;
        }
    }
\end{algorithm}

\section{Variant problems in funnel polygons} \label{sec:variants}

There are a few interesting properties of Algorithm~\ref{alg:funnel} which work even under more constrained versions of our problem. Specifically, our hidden points are constrained to the boundary of $P$ and there is no overlap between any of the convex pieces, implying that we have obtained a convex decomposition. For a general simple polygon, Chazelle and Dobkin \cite{chazelle1985optimal} gave an $O(n^3)$ algorithm for convex decomposition while allowing for the use of Steiner points (points that are not vertices of the polygon). Since this is a more constrained problem, naturally the minimum convex decomposition is at least as large as the minimum convex cover, meaning that our $O(n)$ solution is optimal for the problem of minimum convex decomposition in funnel polygons.

We also can permit additional constraints on the hidden set. In particular, our algorithm places points along the boundary of the polygon. This means that for funnel polygons, there is no ``advantage'' from using points in the interior of the polygon as the maximum hidden set is the same with or without constraining to the boundary. 

We can also adapt our algorithm to deal with the case of maximum hidden vertex set. Although the properties of the visibility graph from \cite{choi1995characterizing} imply a polynomial algorithm for hidden vertex set, we can achieve the same result in $O(n)$.

\begin{theorem}\label{thm:funnel_vertex}
For any funnel polygon $P$, a hidden vertex set $H$ in $P$ and a convex cover $C$ of the vertices of $P$ such that $|H| = |C|$ can be found in linear-time. 
\end{theorem}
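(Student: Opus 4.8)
The plan is to reuse the geometric skeleton of Algorithm~\ref{alg:funnel} essentially verbatim and to change only how the hidden set and the cover are populated. First I would record the duality that makes the vertex problem well behaved: if $C$ is a convex cover of the vertices and $H$ a hidden vertex set, then any single piece $C_i \subseteq P$ contains at most one vertex of $H$, since two vertices lying in a common convex subset of $P$ are joined by a segment inside that subset and hence see each other. Thus $|H| \le |C|$ for the vertex versions, exactly as $hs(P)\le cc(P)$, and producing $H,C$ of equal size certifies both optimal. I would then isolate the one structural fact peculiar to the vertex setting: because each chain is \emph{strictly} reflex, two vertices $v_a,v_b$ of the same chain with $b>a+1$ do not see each other, as the reflex vertices strictly between them poke toward the interior and the chord $v_av_b$ leaves $P$. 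Consequently, within a chain the visibility graph is a path, so a hidden vertex set contains no two consecutive vertices of a chain, and a minimal cover of a chain pairs consecutive vertices.

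Next I would keep the recursion of Algorithm~\ref{alg:funnel} unchanged: the same Case 1 / Case 2 split, the same identification of the lowest failing edge pair, the same Steiner cut at the intersection point $p$, the same recursion on the subpolygon $P'$, and the same blocking argument via Lemma~1 of \cite{GhoshWeakVisible}. Only the charging changes. In a block of matched (strongly visible) edge pairs, the quadrilaterals $v_jv_{j+1}v_{n-j}v_{n-j+1}$ are already known to be convex and contained in $P$, and each is a clique of four mutually visible vertices. Rather than taking one quadrilateral per edge, I would take them at stride two, so that $Q_1,Q_3,Q_5,\dots$ cover the vertices of each chain in consecutive pairs and hence cover every vertex, and I would place a single hidden vertex per chosen piece, always on the longer chain $R_1$ (using the apex triangles of the original near $v_t$, again at stride two). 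Choosing all hidden vertices of a block from $R_1$ guarantees they are pairwise non-consecutive on a single reflex chain and therefore pairwise hidden, avoiding any appeal to cross-chain visibility inside the block; hidden vertices belonging to different blocks, or to a block and the recursive solution on $P'$, are separated by the same Steiner cut that separated the edge midpoints in Theorem~\ref{thm:funnel}, so the original non-visibility argument transfers.

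Finally I would verify the three invariants: the pieces are convex and lie in $P$ (inherited from the Case 1 analysis), every original vertex is covered (consecutive pairing on each chain, with $R_2$ exhausted first since $|R_1|\ge|R_2|$), and $|H|=|C|$ by construction; the recurrence $T(n)=T(n-1)+O(1)$ is unchanged, giving linear time. The step I expect to be the main obstacle is the endpoint and parity bookkeeping at the seams. The subpolygon $P'$ carries the Steiner point $p$ as a base vertex, and since a hidden vertex set may only use genuine vertices of $P$, the construction must never select $p$ (nor any later Steiner point) as a hidden vertex, while still covering the real vertices adjacent to each cut and keeping the hidden-vertex-to-piece correspondence exactly one-to-one across odd-length blocks, the apex, and the transition into each recursive call. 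Getting this matching to hold simultaneously with $|H|=|C|$ is where the real care lies; the geometry itself is inherited from Theorem~\ref{thm:funnel}.
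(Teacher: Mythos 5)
Your Case 1 is essentially the paper's: take the quadrilaterals at stride two, place one hidden vertex per chosen piece on the longer chain, and use strict reflexivity to get pairwise non-visibility along $R_1$; the duality observation (one hidden vertex per convex piece) is also exactly the bound the paper invokes. But your Case 2 has a genuine gap, and it is precisely the one you flag at the end without resolving. You keep the Steiner cut and recurse on a subpolygon $P'$ whose base contains the non-vertex point $p$, and then note that the construction ``must never select $p$ (nor any later Steiner point) as a hidden vertex'' while preserving the one-to-one correspondence $|H|=|C|$. That is not a bookkeeping detail to be deferred --- it is the whole content of the theorem beyond Theorem~\ref{thm:funnel}, and as stated your recursion can fail: the recursive Case 1 placement starts its alternation at the first vertex of the longer chain of $P'$, which can be $p$ itself, and simply shifting the alternation by one can either drop a hidden vertex (breaking $|H|=|C|$) or leave a vertex adjacent to the cut uncovered.

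The paper's resolution is to abandon Steiner points entirely in the vertex version, which is legitimate because a convex cover \emph{of the vertices} need not tile $P$: after handling the block of matched pairs below the lowest failing pair $e_i,e_{n-i}$, it recurses on a subpolygon spanned purely by original vertices, with a parity split on $i$. If $i$ is even, it uses the odd-indexed quadrilaterals for $j<i$ and recurses on $v_{i+1},\dots$; if $i$ is odd, it additionally inserts one triangle $v_iv_{i+1}v_{n-i}$ (with a matching hidden vertex) and recurses starting from $v_{i+2}$, so that the alternation re-synchronizes and every vertex adjacent to the seam is covered. This gives the recurrence $T(n)=T(n-2)+O(1)$ rather than your $T(n)=T(n-1)+O(1)$. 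So your proposal correctly identifies the strategy and the obstacle, but is missing the key idea (Steiner-free recursion plus the odd-$i$ triangle) that makes the seams work.
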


\begin{proof}
Instead of using a convex cover of the whole polygon, we only need to find a convex cover of the vertices to bound the hidden vertex set. This is because every convex piece in the cover of the vertices can have at most 1 vertex in the hidden vertex set by the definition of convexity (all points see each other). We again break into 2 cases, one where no recursion is necessary and another where we can use recursion.

In Case 1, we again have that every edge $e_i$ can strongly see its corresponding partner $e_{n-i}$. If this is the case, then we can place hidden vertices on the longer chain, starting with $v_1$ and alternating until $v_t$ is reached. All odd vertices of index less than or equal to $t$ will be a hidden vertex. We can use the same cover as last time, but again alternating using those corresponding to the odd indexed edges before $e_t$. This is sufficient to cover the vertices as every even vertex $v_{2i}$ is in the same quadrilateral as $v_{2i-1}$ corresponding to edge $e_{2i-1}$. This quadrilateral will also cover $v_{n-2i}$, $v_{n-2i-1}$. See Figure \ref{fig:funnel_easy_vertex} for an illustration.

\begin{figure}[ht]
    \centering
    \includegraphics[width=.5\textwidth]{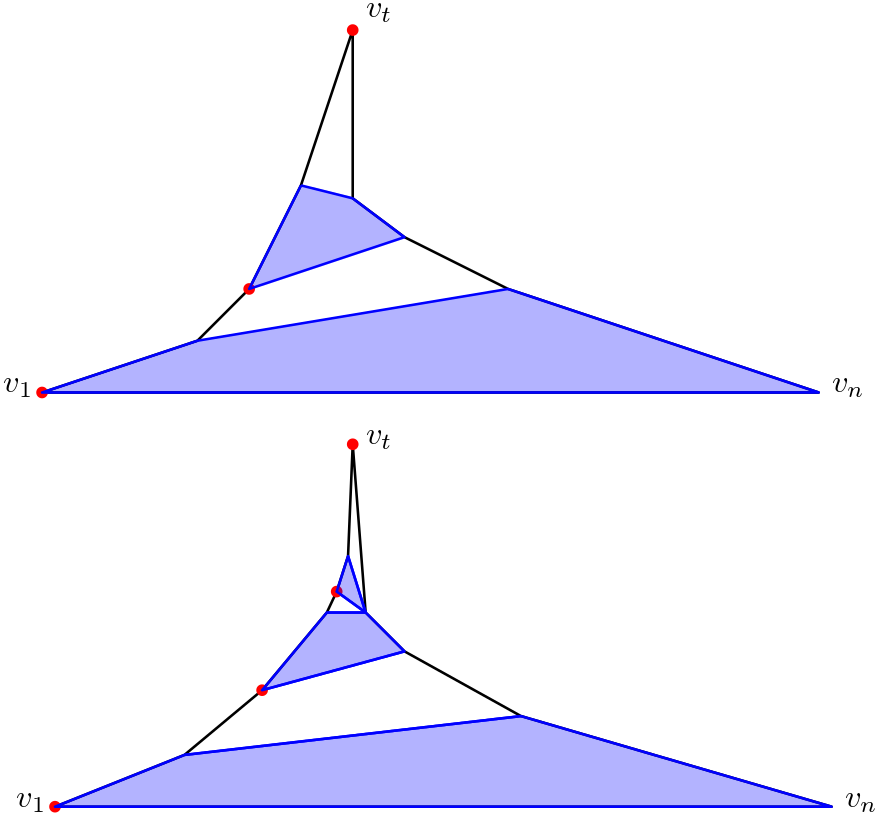}
    \caption{Case 1 for the funnel polygons with respect to hidden vertex set.}
    \label{fig:funnel_easy_vertex}
\end{figure}

Case 2 is identical to that of Theorem \ref{thm:funnel}, but it needs some slight care depending on whether the lowest pair $e_i e_{n-i}$ is such that $i$ (or $n-i$) is odd or even. Without loss of generality, assume that it is $v_i$ that cannot see $v_{n-i-1}$. If $i$ is even, then we will place at all odd numbered vertices less than $i$ and use all the odd numbered quadrilaterals, $v_j v_{j+1} v_{n-j-1}v_{n-j}$ for odd $j$ where $j < i$. We will then recurse on $P' = v_{i+1} ... v_{n-j-1}$. Note that $P'$ does not include any new vertices, and assume that the vertices are  renumbered to reflex their index in the subpolygon. This renumbering can be done implicitly by keeping track of the first and last vertex index in the current recursion layer. This case is depicted in the top picture of Figure \ref{fig:funnel_recurse_vertex}. If $i$ is odd, then we will repeat the procedure for odd numbered vertices and quadrilaterals less than $i$. However, we will also include a triangle, $v_{i}v_{i+1}v_{n-j}$. We then recurse on $P' = v_{i+2} ... v_{n-j-1}$. Again, no new vertices are made and all vertices outside $P'$ have been covered. This subcase is shown in the bottom figure of Figure \ref{fig:funnel_recurse_vertex}.

\begin{figure}[ht]
    \centering
    \includegraphics[width=.5\textwidth]{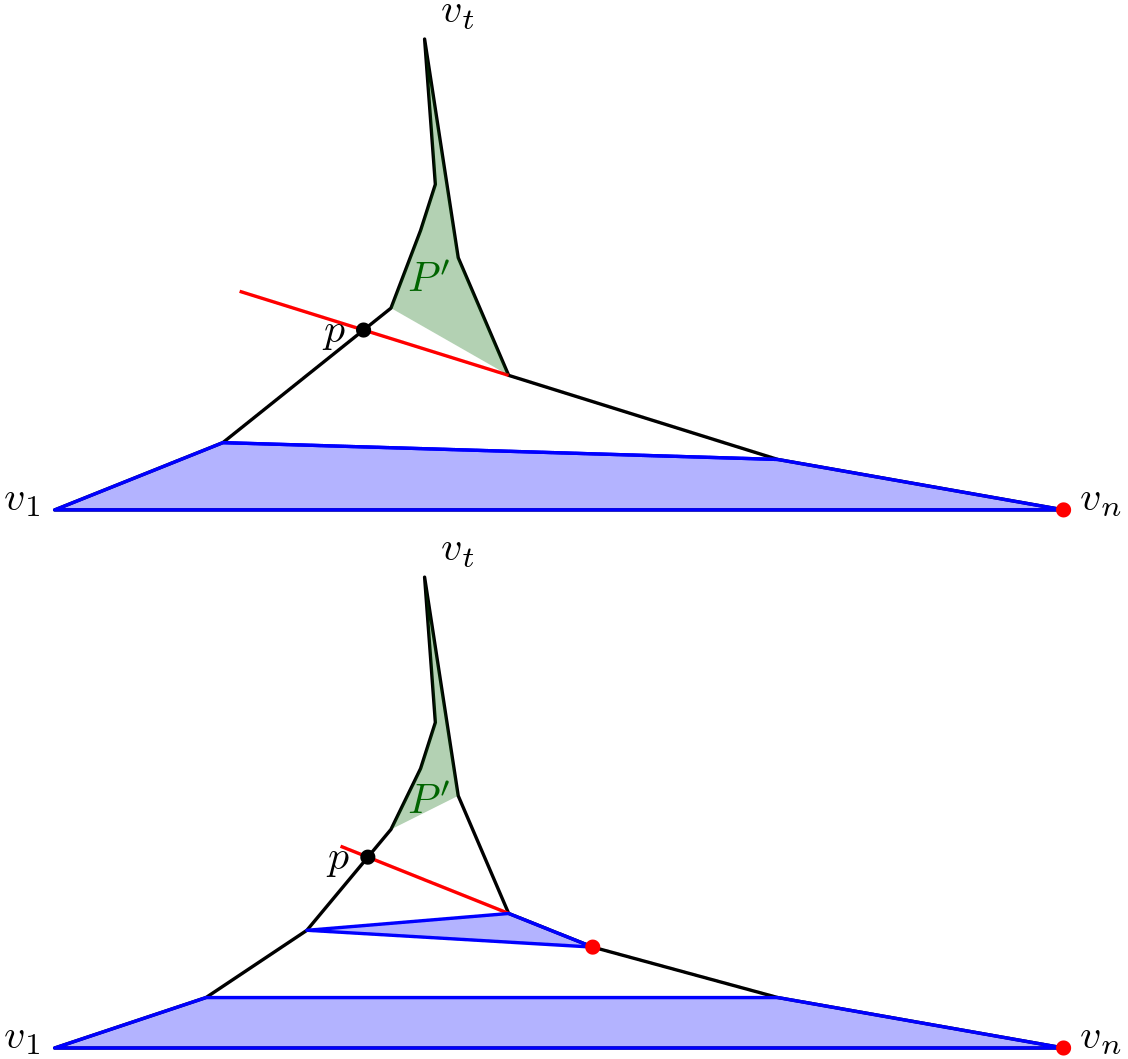}
    \caption{Case 2 for the funnel polygons with respect to hidden vertex set.}
    \label{fig:funnel_recurse_vertex}
\end{figure}

Therefore, by induction, we have a hidden vertex set and a convex cover of the vertices of the same size. All that remains is the recursion for time complexity. In the worst case, $P'$ has 2 less vertices than $P$, which means that we have:

$$T(n) = T(n-2) + O(1)$$

The base case is $T(1) = O(1)$, which solves to $T(n) = O(n)$. 
\end{proof}

\section{Pseudotriangles} \label{sec:pseudos}

Pseudotriangles are defined as simple polygons containing exactly 3 convex vertices, which means that it can be represented as three reflex chains $R_1 = v_1,v_2,...,v_t$, $R_2 = v_t, ... v_{s-1}, v_s$, $R_3 = v_s, . . .  v_{n-1}, v_n$. The three convex vertices here are $v_1, v_t,$ and $v_s$. Without loss of generality, we will consider the cardinalities of the chains to be such that $R_1 \geq R_2 \geq R_3$. Funnel polygons are a subclass of pseudotriangles, where $|R_3| = 1$. We will only consider pseudotriangles that are not funnel polygons, as Section \ref{sec:funnel} already gives a linear-time algorithm for funnel polygons.

\begin{theorem}\label{thm:pseudo}
For any pseudotriangle $P$, a hidden set $H$ in $P$ and a convex cover $C$ of $P$ such that $|C| \leq 2|H|$, i.e. a 2-approximation, can be found in linear-time .
\end{theorem}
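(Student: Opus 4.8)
The plan is to reduce the pseudotriangle to the funnel case that Theorem~\ref{thm:funnel} already handles exactly, paying a factor of two only in how the two resulting hidden sets are recombined. Concretely, I would cut $P$ by a single straight diagonal $d$ into two sub-polygons $F_1$ and $F_2$, each of which is a funnel polygon having $d$ as its convex edge, apply the algorithm of Theorem~\ref{thm:funnel} to each $F_j$ to obtain a hidden set $H_j$ and a convex cover $C_j$ with $|H_j|=|C_j|$, and then output $C=C_1\cup C_2$ together with $H$ equal to whichever of $H_1,H_2$ is larger.

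First I would construct the diagonal. Fix a convex vertex $c$ (say $c=v_1$, where $R_1$ and $R_3$ meet) and consider the chain $R_2$ opposite to it. I would show that $c$ sees at least one point $m$ in the interior of $R_2$, and that $m$ can be chosen so that the diagonal $cm$ lies inside $P$ and splits the reflex interior angle at $m$ into two angles each less than $\pi$. Cutting along $cm$ then yields two pieces: one bounded by $R_1$, the subchain of $R_2$ from $v_t$ to $m$, and $cm$, with convex apex $v_t$; the other bounded by the subchain of $R_2$ from $m$ to $v_s$, the chain $R_3$, and $cm$, with convex apex $v_s$. Since $R_1,R_2,R_3$ are reflex chains of $P$ and both $c$ and $m$ are convex in each piece, both pieces are funnel polygons whose convex edge is exactly $d=cm$.

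Next I would verify the recombination. Because $F_1$ and $F_2$ partition $P$ and each $C_j$ covers $F_j$, the union $C=C_1\cup C_2$ is a convex cover of $P$ with $|C|=|C_1|+|C_2|=|H_1|+|H_2|$. For the hidden set, the key observation is that a set hidden inside a sub-polygon stays hidden in $P$ here: the hidden points produced by Theorem~\ref{thm:funnel} lie on the reflex chains, hence strictly off the diagonal $d$, and any straight segment meets the straight diagonal $d$ in at most one point; so a segment joining two points of $F_1$ cannot leave $F_1$ through $d$ and return, and therefore if it leaves $F_1$ at all it must leave $P$. Thus each $H_j$ is a valid hidden set of $P$, and taking $H$ to be the larger of the two gives $|H|=\max(|H_1|,|H_2|)\ge \tfrac12(|H_1|+|H_2|)=\tfrac12|C|$, i.e. $|C|\le 2|H|$. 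Combined with Shermer's inequality $hs(P)\le cc(P)$ \cite{Shermer} and the sandwich $cc(P)\le|C|\le 2|H|\le 2\,hs(P)\le 2\,cc(P)$, this certifies a $2$-approximation for both problems. For the running time, computing the visibility from $c$ that locates $m$, constructing $d$, and the two funnel runs are each $O(n)$.

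The main obstacle I expect is the very first step: proving that a convex vertex always sees a point $m$ of the opposite chain for which the straight diagonal $cm$ splits $P$ into two genuine funnels, and pinning down $m$ so the induced angle at $m$ is convex on both sides. I would handle existence by a rotational sweep of the ray from $c$ across its interior cone, showing the boundary point it hits passes from $R_1$ to $R_3$ and must therefore cross $R_2$, and I would handle the angle condition by choosing $m$ near the ``deepest'' visible point of $R_2$ so that $cm$ roughly bisects its reflex angle; everything after this split is routine.
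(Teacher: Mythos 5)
Your proposal is essentially the paper's proof: a single straight cut splits the pseudotriangle into two funnel polygons, Theorem~\ref{thm:funnel} is run on each, the covers are unioned and the larger hidden set is kept, giving $|C| = |C_1| + |C_2| = |H_1| + |H_2| \le 2\max(|H_1|,|H_2|) = 2|H|$. The one step you flag as the main obstacle --- producing a point $m$ on $R_2$ visible from a convex vertex such that the cut splits the angle at $m$ into two convex parts --- is exactly where the paper's construction is cleaner: instead of a diagonal emanating from $v_1$, it cuts along the \emph{extension of the edge} $e_1 = v_1v_2$ beyond the reflex vertex $v_2$. Because $v_2$ is reflex, this extension re-enters the interior and must first hit $R_2$ (hitting $R_1$ or $R_3$ would contradict their reflexivity), at a point $p$ interior to some edge $e_i$; the angle at $p$ is $\pi$ and so splits into two convex angles automatically, and the collinearity of $v_1, v_2, p$ makes $v_2$ a convex vertex of the piece $v_2,\dots,v_i,p$ while the other piece keeps $pv_1$ (passing through $v_2$) as its convex edge. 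Note also that if you were to take your diagonal from $v_1$ to this same point $p$, the piece containing $R_1$ degenerates because its closing edge overlaps the polygon edge $v_1v_2$, so your $m$ would have to be chosen strictly inside the visibility cone of $v_1$ and you would still owe an argument that such a visible point of $R_2$ exists; the edge-extension cut sidesteps any analysis of the visibility region of $v_1$. Your recombination step, including the observation that a hidden set of a sub-funnel remains hidden in $P$ because a segment with both endpoints in one piece can cross the straight cut at most once, is correct and matches what the paper leaves implicit.
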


\begin{proof}

We show that by using the algorithm for funnel polygons in Section \ref{sec:funnel}, we can achieve a linear-time 2-approximation for pseudotriangles. The procedure is simple: Consider the edge $e_1 = v_1v_2$, and its extension. The extension of $e_1$ will land at some point $p$ on $R_2$ called $p$. It cannot land on either $R_1$ or $R_3$ otherwise the landed on chain would not be reflex, and it must intersect some part of the boundary of $P$. Let $p$ rest on edge $e_i = v_i v_{i+1}$. This partitions $P$ into two subpolygons $P_1 = v_2,v_3 ...v_i, p$ and $P_2 = p, v_{j+1}, ... v_1$. Both $P_1$ and $P_2$ are composed of two reflex chains and a convex edge, making them both funnel polygons. This is depicted in Figure \ref{fig:pseudo_example}.

\begin{figure}[ht]
    \centering
    \includegraphics[width=.3\textwidth]{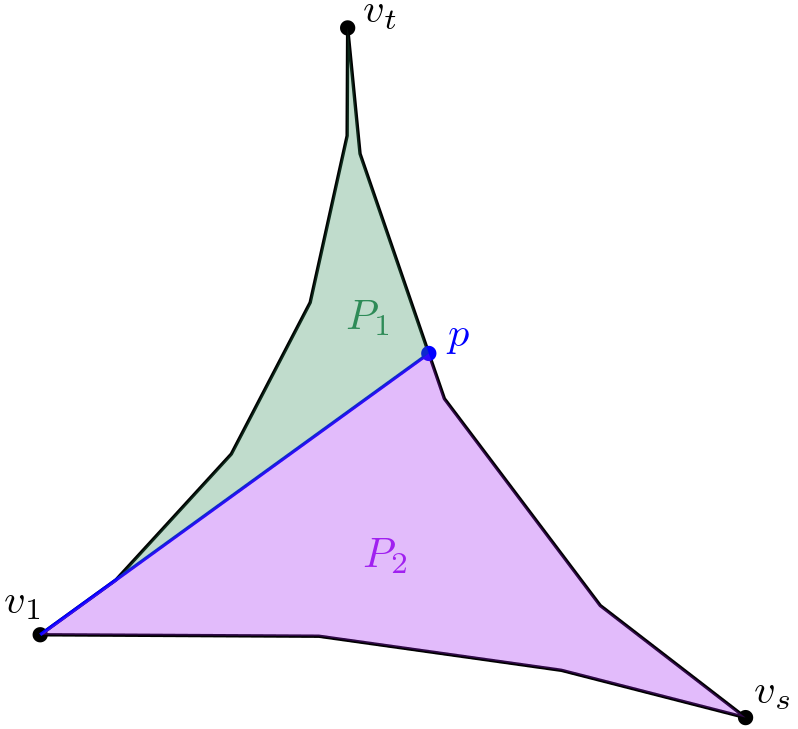}
    \caption{Pseudotriangle partition into funnel polygons that allow for a 2-approximation.}
    \label{fig:pseudo_example}
\end{figure}

From Theorem \ref{thm:funnel}, we know that we can obtain from $P_1$ a hidden set $H_1$ and a convex cover $C_1$. By symmetry, $P_2$ we get $H_2$ and $C_2$. As a hidden set, we take the larger, $H = \text{max}(H_1,H_2)$. As a convex cover, we take the union, $H = C_1 \cup C_2$. Because $|H_1| = |C_1|$ and $|H_2| = |C_2|$, we know that $|C| \leq 2 |H|$. This proves our claim, and since finding $p$ takes at most $O(n)$ time, as does each funnel, the total runtime is $O(n)$.
\end{proof}

Note, that while this algorithm does not achieve an exact answer, it is impossible to have an algorithm that only places hidden points on the edges of a pseudotriangle and that finds the exact maximum hidden set. The famous GFP (Godfried's Favorite Polygon) counterexample suffices here, and we depict it in Figure \ref{fig:godfried_example}. We give a convex cover of the boundary of size 3, which means that any hidden set constrained to the boundary must have size at most 3. We also give a hidden set of size 4, which means that 3 pieces cannot suffice for a convex cover.

\begin{figure}[ht]
    \centering
    \includegraphics[width=.3\textwidth]{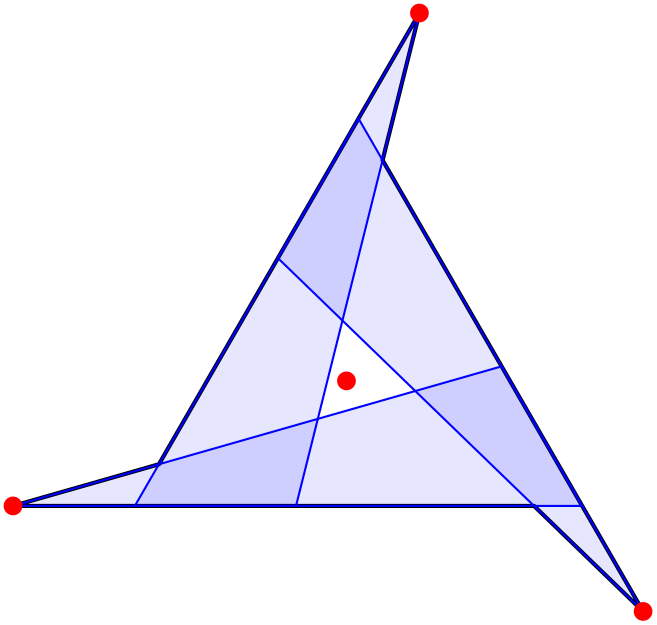}
    \caption{Godfried's favorite polygon, with a convex cover of the boundary of size 3 and a hidden set of size 4.}
    \label{fig:godfried_example}
\end{figure}

Therefore, any algorithm that places hidden points only on the boundary of a pseudotriangle cannot achieve an approximation factor better than $4/3$. 

\begin{theorem} \label{thm:pseudo_vertex}
For any pseudotriangle $P$, a hidden vertex set $H$ in $P$ and a convex cover $C$ of the vertices of $P$ such that $|C| \leq 2 |H|$ can be found in linear-time. 
\end{theorem}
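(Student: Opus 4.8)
The plan is to mirror the proof of Theorem~\ref{thm:pseudo} almost verbatim, replacing the appeal to Theorem~\ref{thm:funnel} with the vertex version Theorem~\ref{thm:funnel_vertex}. Concretely, I would reuse the same partition: extend $e_1 = v_1v_2$ until it meets $R_2$ at a point $p$, and cut $P$ along the chord $v_2p$ into two funnel polygons $P_1$ and $P_2$. Running the algorithm of Theorem~\ref{thm:funnel_vertex} on each produces a hidden vertex set $H_k$ and a convex cover $C_k$ of the vertices of $P_k$ with $|H_k| = |C_k|$ for $k \in \{1,2\}$. Setting $C = C_1 \cup C_2$ and letting $H$ be the larger of $H_1$ and $H_2$ then gives $|C| = |C_1| + |C_2| = |H_1| + |H_2| \le 2|H|$, and each of these steps runs in linear time.

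Before combining, I would verify two facts that also underlie Theorem~\ref{thm:pseudo}. First, a hidden vertex set of $P_k$ is still hidden in $P$: the chord $v_2p$ lies on the single line through $e_1$, and this line meets $P$ in the one segment $v_1p$, so each $P_k$ is the intersection of $P$ with a closed half-plane; since a segment joining two points of a half-plane stays inside it, two vertices of $P_k$ see each other in $P$ exactly when they see each other in $P_k$. Second, $C = C_1 \cup C_2$ covers every vertex of $P$, because each original vertex lies in $P_1$ or $P_2$ and each $C_k$ covers all vertices of its funnel; the Steiner point $p$ itself need not be covered for the vertex problem.

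The main obstacle is the Steiner point $p$, which is a convex-edge endpoint of \emph{both} $P_1$ and $P_2$ yet is not a vertex of $P$. If the funnel-vertex algorithm were to place a hidden vertex at $p$, the resulting set would not be a hidden \emph{vertex} set of $P$, and naively discarding $p$ can cost one unit in each funnel and break the bound by as much as $2$ (for instance when $|H_1| = |H_2|$ and $p$ is selected in both). The key step is therefore to run Theorem~\ref{thm:funnel_vertex} so that $p$ is never chosen. I expect to achieve this with the same parity bookkeeping that Theorem~\ref{thm:funnel_vertex} already uses at the convex edge: since $p$ is selected only when it is the convex-edge endpoint from which the alternation along the longer chain starts, one can either choose the alternation parity so that $p$ lands on an unselected position, or handle the single boundary slab adjacent to the cut separately, covering the nearby real vertices $v_i$ and $v_{i+1}$ without spending a hidden vertex on $p$. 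Checking that this adjustment preserves both $|H_k| = |C_k|$ and full coverage of the real vertices, in the easy case and across the recursion, is where the real work concentrates; once it is in place, the combination above delivers $|C| \le 2|H|$ directly.
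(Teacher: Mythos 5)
Your setup matches the paper's: the same cut along the extension of $e_1$ into two funnels $P_1,P_2$, the same appeal to Theorem~\ref{thm:funnel_vertex}, and the same combination $C = C_1 \cup C_2$ with $H$ the larger of $H_1,H_2$. You also correctly isolate the one real obstacle, the Steiner point $p$. But your proposed resolution --- rerun the funnel-vertex algorithm with the alternation parity chosen so that $p$ is never selected --- is precisely the step you leave unverified, and as sketched it does not obviously work. Flipping the parity along the longer chain (selecting the even-indexed vertices instead of the odd ones) can reduce the number of selected vertices by one whenever that chain has odd length, in which case $|H_k| = |C_k| - 1$ unless the cover is also redesigned; losing one unit in each funnel degrades the final bound to $|C| \le 2|H| + 2$. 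You would also have to propagate the parity constraint through Case~2 of Theorem~\ref{thm:funnel_vertex}, where each recursive subfunnel restarts its own alternation. So the crux of the proof is missing, not merely deferred.

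The paper resolves this differently and more cheaply: it runs Theorem~\ref{thm:funnel_vertex} unchanged on $P_1$ and $P_2$ and post-processes. If $p \in H_1$, it is replaced by the real vertex $v_{i+1}$ (where $p$ lies on $e_i = v_i v_{i+1}$), justified by showing that the set of vertices of $P_1$ visible from $v_{i+1}$ is contained in the set visible from $p$ --- $p$ sees only $v_i$ and the part of $R_1$ below the extension of $p v_i$, and $v_{i+1}$ sees a subset of these, as one checks in the funnel $v_1, v_2, \ldots, v_i, v_{i+1}$. Symmetrically, $p \in H_2$ is replaced by $v_i$. This swap keeps $|H_k'| = |H_k| = |C_k|$ exactly, with no modification to the algorithm or its recursion; the point $p$ is simply clipped out of the cover pieces, which is harmless for covering only the vertices. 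If you want to salvage your route you must actually supply the parity bookkeeping you allude to; otherwise the visibility-subset swap of $p$ for an adjacent vertex is the ingredient your argument is missing.
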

\begin{proof}

    For the vertex version, we again split $P$ into two funnel polygons $P_1,P_2$ using the point $p$ (intersection of the edge extension of $e_1$ with the boundary). Using Theorem \ref{thm:funnel_vertex}, this will give us two hidden sets, $H_1$ and $H_2$ as well as two convex covers of the vertices $C_1$ and $C_2$. We can again take the union of $C_1$ and $C_2$ and the larger of $H_1$ or $H_2$, but we have some issues regarding the point $p$ that was used in the partition. Both $C_1$ and $C_2$ will cover $p$, so if we wish to constrain our convex pieces to vertices, we can easily clip out $p$ from the pieces that it is in. The larger issue is with the hidden sets. 
    
    To make $H_1$ into a hidden vertex set, we need to determine whether the point $p$ is in $H_1$. If it is not, we do nothing as $p$ is the only non-vertex point in $P_1$. If it is, then we will replace $p$ with $v_{i+1}$ ($p$ lies on edge $e_i = v_iv_{i+1}$). We know that $v_{i+1}$ cannot see any hidden point in $H_1$ because the only vertices that $p$ sees in $P_1$ are $v_i$ and those in the chain $R_1$ that are below the extension of edge $p v_i$. This is a superset of the vertices in $P_1$ that $v_{i+1}$ sees, which follows from considering the funnel $v_1, v_2, ... v_i v_{i+1}$. In this funnel, $v_{i+1}$ can only see those vertices under ``before'' the edge extension of $e_i$. We give an example of this shift to $v_{i+1}$ in Figure \ref{fig:pseudo_vertex}.

    \begin{figure}[ht]
    \centering
    \includegraphics[width=.6\textwidth]{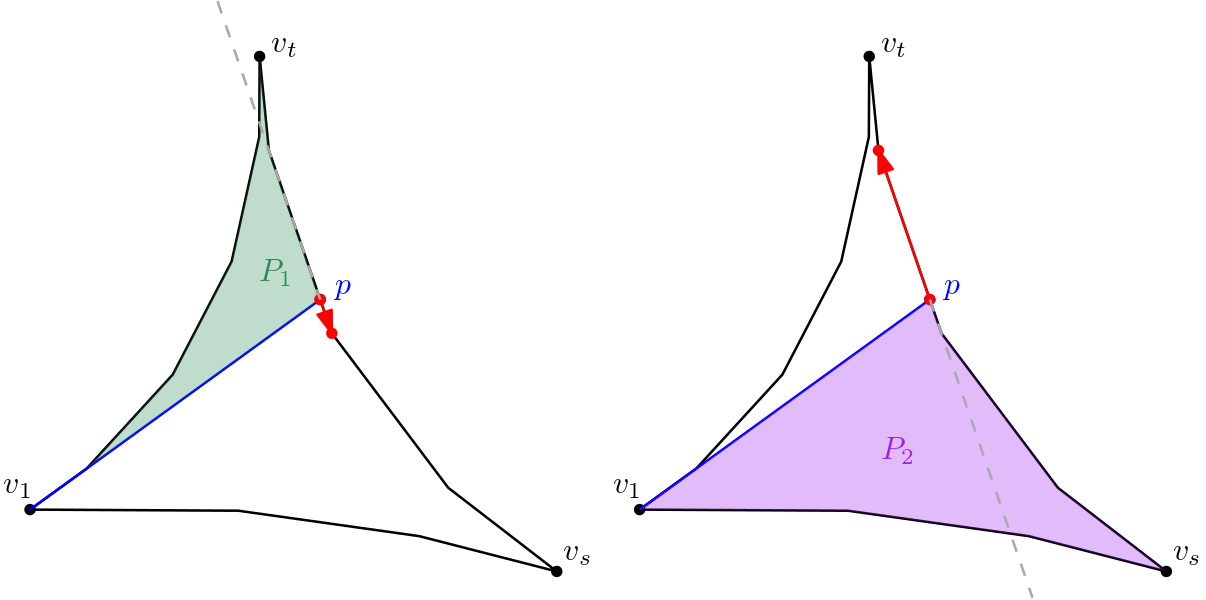}
    \caption{Shifting a non-vertex hidden point $p$ to a vertex. Left: shifting to a vertex in $P_1$ Right: shifting to a vertex in $P_2$.}
    \label{fig:pseudo_vertex}
    \end{figure}

    The same can be done for $H_2$, but instead shift a hidden vertex placed at $p$ to $v_i$. By symmetry, the exact same argument applies. Therefore, we can take the adjusted hidden sets $H_1'$ and $H_2'$ and use the larger of the two as our hidden vertex set $H$. Since $|H_1'| = |H_1| = |C_1|$ and $|H_2'| = |H_2| = |C_2|$, we again have that $|C| \leq 2|H| $, hence proving our claim. Finding $p$ takes naively $O(n)$ time and running the funnel algorithm on both $P_1$ and $P_2$ is at most $O(n)$. This gives us a total runtime of $O(n)$.
    
\end{proof}

\section{Conclusion}
We show that funnel polygons are homestead polygons. We do so by giving linear-time algorithms for both hidden set and convex cover. This, combined with the results from Browne and Chiu \cite{browne2022collapsing}, give us 3 classifications of polygon for which the hidden set number and convex cover number both coincide and can be calculated in linear-time: spiral polygons, histogram polygons, and funnel polygons. We also now have a class of polygons for which minimum convex cover, maximum hidden set, and maximum hidden vertex set can be approximated within a factor of 2, pseudotriangles. 

We believe that the approximation factor for pseudotriangles can be greatly reduced, and do not believe the problems to be NP-hard for the pseudotriangle case. We also believe that while interior hidden points may be necessary for some instances, we suspect that there is no need to include more than 1, as is needed in the Godfried example. Naturally more open questions revolve around looking at the problems for more polygonal subclasses. We are particularly interested in monotone mountains and convex fans and believe similar techniques may be used for solving or at least approximating in those cases.


\bibliographystyle{plainurl}

\bibliography{credits}

\begin{thebibliography}{1}

\bibitem{browne2022collapsing}
Reilly Browne and Eric Chiu.
\newblock Collapsing the hidden-set convex-cover inequality.
\newblock In {\em Proceedings of the 38th Computational Geometry Young
  Researchers Forum (CG:YRF 2022)}, pages 37--42. Schloss
  Dagstuhl--Leibniz-Zentrum f{\"u}r Informatik, 2022.

\bibitem{chazelle1985optimal}
Bernard Chazelle and David~P Dobkin.
\newblock Optimal convex decompositions.
\newblock In {\em Machine Intelligence and pattern recognition}, volume~2,
  pages 63--133. Elsevier, 1985.

\bibitem{choi1995characterizing}
Sunghee Choi, Sukyoung~Y Shin, and Kyung-Yong Chwa.
\newblock Characterizing and recognizing the visibility graph of a
  funnel-shaped polygon.
\newblock {\em Algorithmica}, 14(1):27--51, 1995.
\newblock \href {https://doi.org/10.1007/BF01300372}
  {\path{doi:10.1007/BF01300372}}.

\bibitem{EIDENBENZHidden}
Stephan Eidenbenz.
\newblock Inapproximability of finding maximum hidden sets on polygons and
  terrains.
\newblock {\em Computational Geometry}, 21(3):139--153, 2002.
\newblock URL:
  \url{https://www.sciencedirect.com/science/article/pii/S0925772101000293},
  \href {https://doi.org/https://doi.org/10.1016/S0925-7721(01)00029-3}
  {\path{doi:https://doi.org/10.1016/S0925-7721(01)00029-3}}.

\bibitem{EidenbenzConvexCover}
Stephan~J. Eidenbenz and Peter Widmayer.
\newblock An approximation algorithm for minimum convex cover with logarithmic
  performance guarantee.
\newblock {\em SIAM Journal on Computing}, 32(3):654--670, 2003.
\newblock \href
  {http://arxiv.org/abs/https://doi.org/10.1137/S0097539702405139}
  {\path{arXiv:https://doi.org/10.1137/S0097539702405139}}, \href
  {https://doi.org/10.1137/S0097539702405139}
  {\path{doi:10.1137/S0097539702405139}}.

\bibitem{GhoshWeakVisible}
Subir~Kumar Ghosh, Anil Maheshwari, Sudebkumar~Prasant Pal, Sanjeev Saluja, and
  C.E. {Veni Madhavan}.
\newblock Characterizing and recognizing weak visibility polygons.
\newblock {\em Computational Geometry}, 3(4):213--233, 1993.
\newblock URL:
  \url{https://www.sciencedirect.com/science/article/pii/0925772193900104},
  \href {https://doi.org/https://doi.org/10.1016/0925-7721(93)90010-4}
  {\path{doi:https://doi.org/10.1016/0925-7721(93)90010-4}}.

\bibitem{GROTSCHEL1984325}
M.~Grötschel, L.~Lovász, and A.~Schrijver.
\newblock Polynomial algorithms for perfect graphs.
\newblock In C.~Berge and V.~Chvátal, editors, {\em Topics on Perfect Graphs},
  volume~88 of {\em North-Holland Mathematics Studies}, pages 325--356.
  North-Holland, 1984.
\newblock URL:
  \url{https://www.sciencedirect.com/science/article/pii/S0304020808729438},
  \href {https://doi.org/https://doi.org/10.1016/S0304-0208(08)72943-8}
  {\path{doi:https://doi.org/10.1016/S0304-0208(08)72943-8}}.

\bibitem{Shermer}
T.~Shermer.
\newblock Hiding people in polygons.
\newblock {\em Computing}, 42(2-3):109–131, 1989.
\newblock \href {https://doi.org/10.1007/bf02239742}
  {\path{doi:10.1007/bf02239742}}.

\bibitem{spinrad1995algorithms}
Jeremy Spinrad and R~Sritharan.
\newblock Algorithms for weakly triangulated graphs.
\newblock {\em Discrete Applied Mathematics}, 59(2):181--191, 1995.

\end{thebibliography}

\newpage

\end{document}